\renewcommand\footnotetextcopyrightpermission[1]{} 
\newtheorem{theorem}{Theorem}
\theoremstyle{remark}
\newtheorem{definition}{Definition}
\definecolor{red}{rgb}{1,0,0}
\definecolor{blue}{rgb}{0,0,1}
\definecolor{mygreen}{rgb}{0.23,0.33,0.16}
\begin{document}
\title{Efficient Mining of Frequent Subgraphs with Two-Vertex Exploration}

\author{Peng Jiang}
\affiliation{%
  \institution{The University of Iowa}
}
\email{peng-jiang@uiowa.edu}

\author{Rujia Wang}
\affiliation{%
  \institution{Illinois Institute of Technology}
}
\email{rwang67@iit.edu}

\author{Bo Wu}
\affiliation{%
  \institution{Colorado School of Mines}
}
\email{bwu@mines.edu}

\begin{abstract}
Frequent Subgraph Mining (FSM) is the key task in many graph mining and machine learning applications. 
Numerous systems have been proposed for FSM in the past decade. 
Although these systems show good performance for small patterns (with no more than four vertices), we found that they have difficulty in mining larger patterns. 
In this work, we propose a novel {\em two-vertex exploration} strategy to accelerate the mining process.   
Compared with the single-vertex exploration adopted by previous systems, our two-vertex exploration avoids the large memory consumption issue and significantly reduces the memory access overhead.  
We further enhance the performance through an {\em index-based quick pattern} technique that reduces the overhead of isomorphism checks, and a {\em subgraph sampling} technique that mitigates the issue of subgraph explosion.  
The experimental results show that our system achieves significant speedups against the state-of-the-art graph pattern mining systems and supports larger pattern mining tasks that none of the existing systems can handle. 
\end{abstract}

\maketitle

\pagestyle{plain} 

\vspace{-1em}
\section{Introduction}
Frequent Subgraph Mining (FSM) is an important operation on graphs and is widely used in various application domains, including bioinformatics~\cite{Milo824,  Vazquez17940}, computer vision~\cite{10.1145/2324796.2324831}, and social network analysis~\cite{10.1145/2488388.2488502}.  
The task is to discover frequently occurring subgraph patterns from an input graph. 
Different from graph pattern matching problems where a query pattern is given, FSM needs to find the important patterns based on a support measure and thus has a much larger exploration space. 

Since the patterns of interest are unknown, most systems for FSM take an {\em explore-aggregate-filter} approach~\cite{10.1145/2815400.2815410, 10.1145/3299869.3319875, 222571, 10.14778/3389133.3389137}. 
The principle is to explore all the subgraphs, aggregate the subgraphs according to their patterns, and filter out the subgraphs that are redundant or are not of interest. 
The exploration happens in a vertex-by-vertex manner where smaller subgraphs are iteratively extended based on the connections in the graph. 
There are mainly two ways for exploration: breadth-first and depth-first. 
Starting from all vertices in the graph, breadth-first exploration stores all subgraphs of size $l$ and extends them with one more vertex to find subgraphs of size $l+1$. 
The main problem of breadth-first exploration is that the intermediate data can easily exceeds the memory capacity as the subgraph size grows.   
With depth-first exploration, a subgraph of size $l$ is immediately extended to a subgraph of size $l+1$ without seeing other subgraphs of size $l$. 
It needs not save the intermediate subgraphs and thus can explore larger patterns. 
However, depth-first exploration cannot exploit the {\em anti-monotone} property to prune the search space~\cite{10.1145/3299869.3319875}, resulting in a lot of unnecessary computation.

Some recent graph mining systems take a {\em pattern-based} approach~\cite{10.1145/3341301.3359633, 10.1145/3342195.3387548}.  
The idea is to enumerate the (unlabeled) subgraph patterns and then perform pattern matching on the graph. 
Because the pre-generated patterns guide the exploration, these systems need not store any intermediate data, and the aggregation overhead can be reduced as the topology of the subgraphs is given. 
However, this approach only works well for small patterns because when the pattern is larger (more than 6), listing all subgraph patterns itself becomes a hard problem~\cite{MCKAY201494, nauty, ng}. 
It is also difficult for the pattern-based systems to exploit the {\em anti-monotone} property to prune the search space. 
Peregrine~\cite{10.1145/3342195.3387548} maintains a list of frequent patterns, extend the patterns with one vertex or edge, and then re-match the extended patterns on the graph. It prunes the search space without storing the intermediate subgraphs, but the re-matching incurs a lot of redundant computation. 
These issues have impeded the existing graph mining systems from supporting FSM for large patterns. 
In fact, most of the prior work only reports experimental results for FSM with no more than 4 vertices.



To enable large pattern mining, we propose a novel \textit{two-vertex exploration} method in this work. 
Our key observation is that vertex-by-vertex exploration is not necessary for pattern mining. 
Instead, we can perform two-vertex exploration that joins size-($n-2$) subgraphs with size-$3$ subgraphs on a common vertex to obtain subgraphs of size-$n$. 
The new exploration method significantly accelerates the exploration process and reduces the memory access overhead in the join operation. It also allows us the exploit the anti-monotone property to prune the exploration space without storing the intermediate subgraphs or re-matching the patterns.  

To further accelerate the mining process, we propose two new techniques to overcome the performance bottlenecks. 
One performance bottleneck is due to the expensive isomorphism checks in the aggregation step. 
To aggregate the subgraphs based on their patterns, we need to generate a {\em canonical form} for each subgraph such that the subgraphs with the same canonical form are isomorphic. 
Unfortunately, the best known algorithms for generating such canonical forms have exponential complexity~\cite{10.14778/1453856.1453899, 4568073, 1184038}.  
Therefore, we want to perform isomorphism check for as few subgraphs as possible. 
Previous work has employed a {\em quick pattern} technique to reduce the number of isomorphism checks~\cite{10.1145/2815400.2815410, 222571}. 
The main is to first group the subgraphs based on an easily computed pattern (e.g., a list of all edges). 
Since subgraphs in the same group must be isomorphic,   only one isomorphism check is needed for each group. 
We improve on this idea by proposing an {\em index-based quick pattern} technique. 
It assigns an index to each pattern and uses the indices to  compute a quick pattern for the joined subgraph. 
Compared with the quick pattern technique used in prior work, 
our quick pattern encodes the information of sub-patterns and achieves more accurate grouping of the subgraphs, leading to a significant reduction of isomorphism checks.

Another more fundamental challenge of mining large patterns on graphs is due to the exponential growth of the exploration space. 
For example, in a median size graph, MiCo~\cite{10.14778/2732286.2732289}, which
has $9\times 10^4$ vertices and $10^6$
edges, there are more than $10^{12}$ size-5 subgraphs. When the pattern size increases to 7, the estimated
number of subgraphs is in the order of $10^{17}$ for which exhaustive enumeration becomes infeasible. 
To mitigate this issue, we propose a {\em subgraph sampling} technique. 
The idea is that we sample a small subset of size-3 subgraphs for exploring larger subgraphs during the joining and/or the matching phase. 
Since the subgraphs of frequent patterns are more likely to be sampled, we are able to discover frequent patterns with only a small number of sampled subgraphs. 
Compared with previous works that apply edge or neighbor sampling to FSM~\cite{222637, 8411069}, we can discover more frequent patterns with the same or less computation. This is because subgraph samples preserve more structural information of the graph than edge samples.

We perform extensive evaluation of our system and compare with three state-of-the-art graph mining systems: AutoMine~\cite{10.1145/3341301.3359633}, Peregrine~\cite{10.1145/3342195.3387548},  and Pangolin~\cite{10.14778/3389133.3389137}.   
The results show that without using sampling our system achieves 1.8x to 8.4x speedups on tasks for which the compared systems can return. 
By using sampling, our system can discover larger patterns that none of the existing systems can handle. 


\section{Background}
This section introduces the graph related concepts that are important to our discussion and formally defines the frequent subgraph mining problem. 

\label{sec:appl}
\subsection{Graph Basics}
A {\em graph} $G$ is defined as $G = (V, E, L)$ consisting of a set of vertices $V$, a set of edges $E$ and a labeling function $L$ that assigns labels to the vertices and edges. 
A graph $G' = (V', E', L')$ is a {\em subgraph} of graph $G = (V, E, L)$ if $V'\subseteq V$, $E'\subseteq E$ and $L'(v)=L(v), \forall v\in V'$. 
A subgraph $G' = (V', E', L')$ is {\em vertex-induced} if all the edges in $E$ that connect the vertices in $V'$ are included $E'$. 
A subgraph is {\em edge-induced} if it is connected and is not vertex-induced.

\begin{definition}[Isomorphism]
  Two graphs $G_a=(V_a, E_a, L_a)$ and $G_b=(V_b, E_b, L_b)$ are isomorphic if there is a bijective function $f: V_a\Rightarrow V_b$ such that $(v_i, v_j)\in E_a$ if and only if $(f(v_i), f(v_j))\in E_b$.
\end{definition}

We say two (sub)graphs have the same {\em pattern} if they are isomorphic. The pattern is a template for the isomorphic subgraphs, and a subgraph is an instance (also called {\em embedding}) of its pattern. 
To determine the pattern of a subgraph, a canonical form for each subgraph can be computed, and the subgraphs with the same canonical form are isomorphic. 
There are various tools and algorithms available for graph isomorphism check~\cite{MCKAY201494, JunttilaKaski:ALENEX2007, 1184038}. All of these algorithms have exponential complexity.  
We use bliss~\cite{JunttilaKaski:ALENEX2007} for isomorphism check in our system as it is fast in practice and is widely used in graph mining systems~\cite{10.1145/2815400.2815410, 222571, 10.1145/3342195.3387548}. 
A related concept is {\em automorphism check} which checks if two subgraphs are identical, even though they might have different orderings of vertices and edges.

\subsection{Frequent Subgraph Mining} 
The task of Frequent Subgraph Mining (FSM) is to obtain all frequent subgraph patterns from a labeled input graph. A pattern is considered frequent if it has a {\em support} above a threshold. 
  While the definition of the support measure can vary across applications, the support usually needs to satisfy an {\em anti-monotone} property, i.e., the support of a pattern should be no greater than the support of its sub-patterns~\cite{10.1145/3035918.3035936}.

  \begin{definition}[MNI Support]
  Given a pattern $P=(V_p, E_p, L_p)$ and an input graph $G=(V,E,L)$, if $P$ has $m$ embeddings $\{f_1, f_2, \ldots, f_m\}$ in $G$, the minimum image based (MNI) support of $P$ in $G$ is defined as
  \[
  \sigma_{MNI}(P,G) = \min_{v\in V_p} |\{f_i(v): i=1,2,\ldots,m\}|. 
  \]
\end{definition}
\vspace{-0em}
Other support measures include maximum independent set based (MIS) support, minimum instance based (MI) support, and maximum vertex cover based (MVC) support. 
All these support measures are anti-monotone. MNI support is the most  commonly used one because it has linear computation complexity while achieving a good accuracy in measuring the `frequency' of patterns in a graph. 
The readers are refered to~\cite{10.1145/3035918.3035936} for detailed descriptions and computation complexity of different support measures. 
We adopt the MNI support for our experiments, although our proposed techniques are applicable to any support measure with the anti-monotone property. 

With a support measure $\sigma$, the frequent subgraph mining problem is defined as finding all patterns $\{P_i=(V_i, E_i, L_i)\}$ in a graph $G$ such that $|V_i|=s$ and $\sigma(P_i, G)\geq t$ where $s$ is the given pattern size and $t$ is the given support threshold. 
The support can be calculated with either vertex-induced subgraphs or edge-induced subgraphs. 
Our proposed techniques work for both cases. We use edge-induced subgraphs for  experiments, as it is the common setting in prior work~\cite{222571, 10.1145/3299869.3319875, 10.1145/3342195.3387548}.

\section{Illustration of the Idea}
\label{sec:overview}
\begin{figure}[t]
  \centering
 \subfloat[An example graph]{
   \includegraphics[scale=0.36]{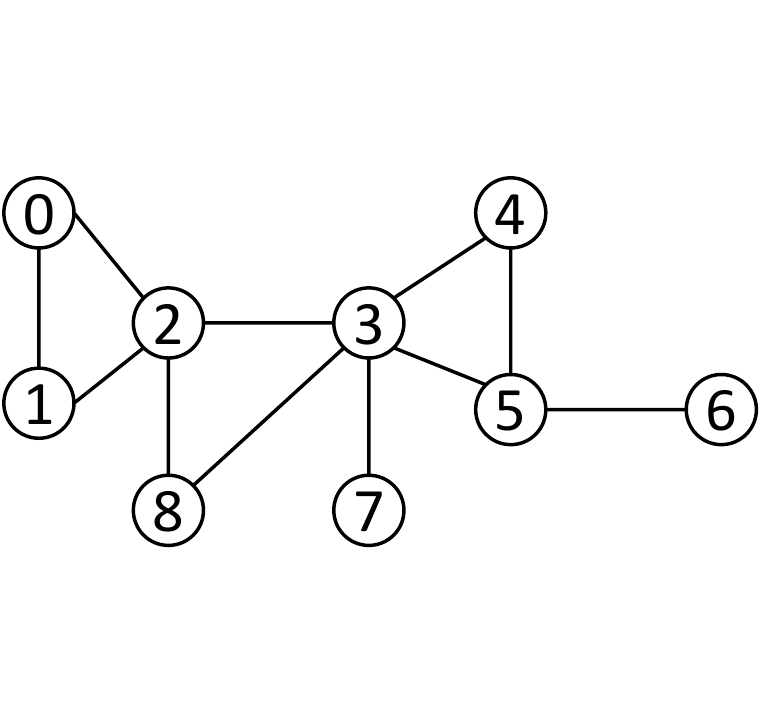}
   \label{fig:reg_graph}
 } \hfil
 \subfloat[Size-3 subgraphs]{
   \includegraphics[scale=0.39]{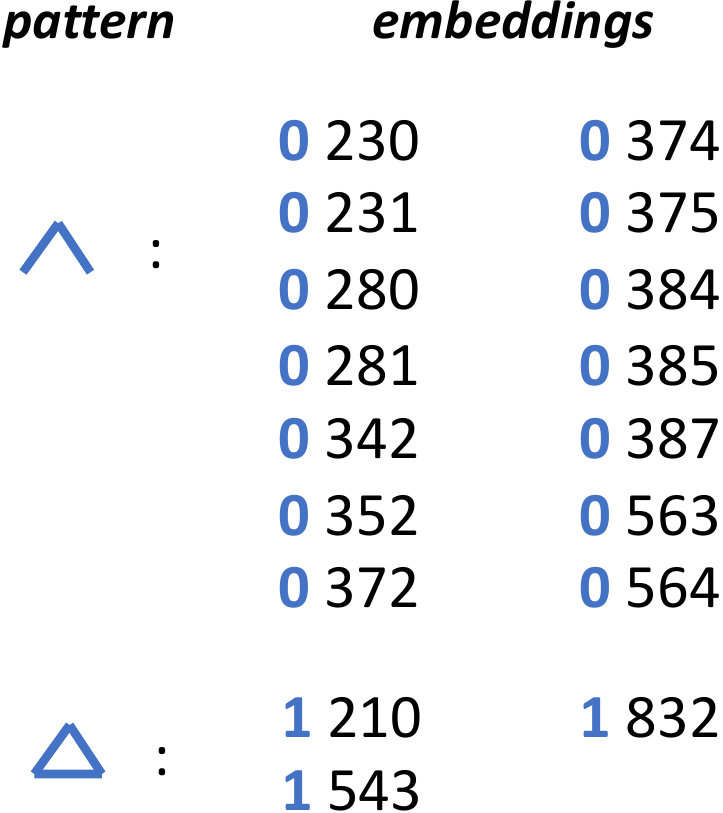}
   \label{fig:reg_size3}
 }\\ \vspace{-0.6em}
 \subfloat[Join the size-3 subgraphs on every column]{
   \includegraphics[scale=0.39]{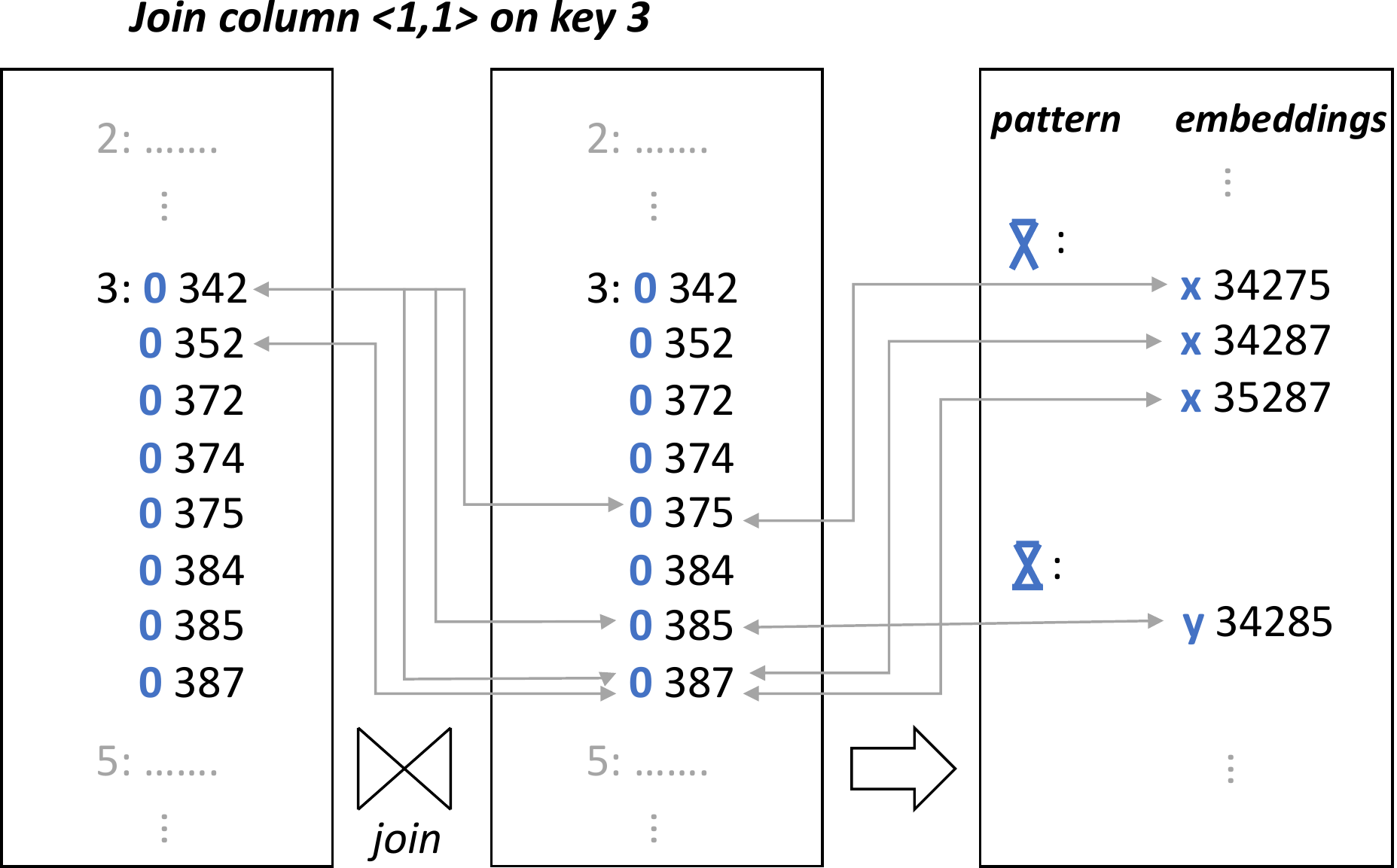}
   \label{fig:reg_join}
 }
     \vspace{-.5em}
 \caption{An example of finding size-5 subgraphs by joining size-3 subgraphs.}
\vspace{-1em}
 \label{fig:reg}
\end{figure}

Before getting into technical details, we describe our idea of
two-vertex exploration with an example. 
We use an unlabeled graph for simple illustration. 

Suppose our task is to discover size-5 patterns in an input graph (as shown in Figure~\ref{fig:reg_graph}). 
We can first find all size-3 subgraphs and join them on a common vertex to obtain size-5 subgraphs. 
In this example, we first apply a matching algorithm to obtain all the embeddings of size-3 patterns (i.e., wedge and triangle) as listed in Figure~\ref{fig:reg_size3}.  
Each pattern is assigned an index (0 for wedge and 1 for triangle in this example), and  the index is stored with each embedding during the pattern matching.  

Next, we calculate the MNI support for each size-3 pattern and prune the patterns with support less than the threshold. 
In this example, the supports for both wedge and triangle is 3. 
Suppose we set the support threshold to 3. Neither of the patterns will be pruned. After we obtain the pruned size-3 subgraphs, we perform binary join on every pair of the columns (i.e., $\langle 1,1 \rangle$, $\langle 1,2 \rangle$,$\langle 1,3 \rangle$,$\langle 2,1 \rangle$,$\langle 2,2 \rangle$,$\langle 2,3 \rangle$,
$\langle 3,1 \rangle$,$\langle 3,2 \rangle$,$\langle 3,3 \rangle$) to explore size-5 subgraphs. 
Figure~\ref{fig:reg_join} shows how we can obtain four size-5 subgraphs by joining column $\langle 1,1 \rangle$ on key $3$.  
Every pair of subgraphs with key $3$ are tested (i.e., $\langle$`342', `342'$\rangle$, $\langle$`342, `352'$\rangle$, ..., $\langle$`387', `385'$\rangle$, $\langle$`387', `387'$\rangle$). 
If two subgraphs have one and only one common vertex, they compose a valid size-5 subgraph. 
In this example, `342' and `375' make up a valid size-5 subgraph `34275'; `342' and `387' make up `34287'; `352' and `387'  make up `35287'; and `342' and `385' make up `34285'. These valid joins are marked with connected arrows in Figure~\ref{fig:reg_join}. 
We can see that, through the join operation, we grow the pattern size from 3 to 5 in one exploration step. 
We will show that such two-vertex exploration is exhaustive for subgraph exploration in \S\ref{sec:validity}.

One may notice that the result of joining `374' with `385' (`37485') is not included in Figure~\ref{fig:reg_join}. 
This is because our system performs an automorphism check when generating the join results to remove redundancy. We propose a \textit{smallest-vertex first} dissection method that ensures only the results that are obtained by joining the subgraph of the smallest spanning vertex indices are saved. 
In this case, the `37485' subgraph will be generated when we join the third column of `543' and the first column of `387'.
More details on the automorphism check and redundancy removal are explained in \S\ref{sec:redun_removal}.

The above procedure can be extended to explore larger subgraphs by joining multiple subgraph lists. 
For example, a 3-way join of two size-3 subgraphs and one size-2 subgraphs (i.e. edges) will explore all size-6 subgraphs. A 3-way join of size-3 subgraphs will explore all size-7 subgraphs. 
Given an input graph $G$, a pattern size $s$, and a support threshold $t$, the workflow of our frequent subgraph mining algorithm is summarized as follows:

\begin{enumerate}[label=, leftmargin=1\parindent, itemsep=1ex, topsep=1ex]
    \item Step1: Obtain all size-3 subgraphs by matching.
\item Step2: Calculate the support for each size-3 and size-2 pattern, and remove patterns with support smaller than $t$ along with their subgraphs. 
\item Step3: Perform multi-way join of size-3 subgraphs and/or edges to obtain subgraphs of size s: 
if $s==2n+1$, join $n$ size-3 subgraph lists;
if $s==2n$, join the edge list with $n-1$ size-3 subgraph lists. 
\item Step4: Calculate support for each size-$s$ patterns and remove patterns with support smaller than $t$. 
\end{enumerate}
\noindent
For Step1, any matching algorithm will work; we use AutoMine~\cite{10.1145/3341301.3359633} in our implementation. 
Step2 and Step4 are straightforward based on the definition of the support measure. 
Step3 is the most important step in the algorithm. We will detail this step in the next section.

\section{Subgraph Exploration Process}

\label{sec:explore}
 All the current graph mining systems based on the explore-aggregate-filter approach use single-vertex exploration because it ensures that all the size-$n$ subgraphs can be found by extending the size-($n-1$) subgraphs with an edge. 
 We find that limiting the step size to 1 is not a must to find all patterns. 
 This section describes our two-vertex exploration idea and explains its advantage over single-vertex exploration. 

\subsection{Two-Vertex Exploration}
\label{sec:validity}

We propose to explore the size-$n$  subgraphs by joining the size-($n-2$) subgraphs with the size-3 subgraphs (i.e., wedges and triangles). 
The completeness of this two-vertex exploration method is summarized as follows.

\begin{theorem}
\label{th:1}
All of the size-$n$ subgraphs can be discovered by joining the size-($n-2$) subgraphs with the size-$3$ subgraphs on a common vertex. 
\end{theorem}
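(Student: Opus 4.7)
My plan is to prove the statement by exhibiting, for every connected size-$n$ subgraph $H$ (with $n \geq 3$), an explicit decomposition $H = A \cup B$ in which $A$ is a connected size-$(n{-}2)$ subgraph, $B$ is a connected size-$3$ subgraph (hence a wedge or triangle), and $V(A) \cap V(B) = \{v\}$ for some single shared vertex $v$. Once such a decomposition is exhibited, $H$ is recovered by the binary join on $v$ between the size-$(n{-}2)$ list and the size-$3$ list, so the Step~3 enumeration is exhaustive.

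\textbf{Construction.} I would fix an arbitrary spanning tree $T$ of $H$, root it at some vertex $r$, and let $u_1$ be a deepest leaf of $T$, with parent $p$. The argument then splits into two subcases. If $p$ has a second child $u_2$ in $T$, that child must also be a leaf by the maximal-depth choice of $u_1$; setting $v = p$, the set $\{v, u_1, u_2\}$ spans the tree edges $v$--$u_1$ and $v$--$u_2$, which already form a wedge (becoming a triangle if $u_1 u_2$ is also an edge of $H$), and $T$ minus the two leaves $u_1, u_2$ remains a tree on the $n-2$ surviving vertices, so $A$ is connected. Otherwise $u_1$ is the only child of $p$ in $T$; since $n \geq 3$, the vertex $p$ has its own parent $p'$ in $T$, and I set $u_2 = p$ and $v = p'$. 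Then $\{v, u_1, u_2\}$ spans the tree edges $u_1$--$u_2$ and $u_2$--$v$ (again a wedge), while deleting $u_1$ from $T$ makes $p$ a leaf, so deleting $p$ afterwards still leaves a tree on the remaining $n-2$ vertices and $A$ is connected. In both cases $V(A) \cap V(B) = \{v\}$ and $V(A) \cup V(B) = V(H)$, so joining $A$ and $B$ on $v$ reproduces $H$.

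\textbf{Main obstacle.} The delicate point is satisfying the two structural constraints simultaneously: removing two arbitrary vertices from $H$ may disconnect it, while insisting that the removed pair together with a third vertex form a connected size-$3$ subgraph rules out many otherwise-safe removals. Anchoring the choice at a deepest leaf of a spanning tree resolves both requirements at once, because deep leaves (and their only-child parents) can always be peeled off without disconnecting $T$, and the tree edges incident to them immediately supply the wedge edges for $B$. Minor additional bookkeeping is needed to dispatch the small cases $n = 3, 4$, and to confirm that the argument is compatible with the automorphism-based redundancy removal discussed later in the same section, so that exhaustiveness survives duplicate pruning.
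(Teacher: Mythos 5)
Your proof is correct, but it takes a genuinely different route from the paper. The paper argues by induction on the subgraph size: the base case checks that all six size-$4$ patterns split into a size-$3$ subgraph plus an edge, and the inductive step considers how a size-$(n{+}1)$ subgraph arises from a size-$n$ one by adding a vertex, updating the dissection according to whether the new vertex attaches to the size-$(n{-}2)$ part or only to the size-$3$ part (the latter case is handled somewhat tersely, by asserting that three connected vertices can always be re-chosen). You instead give a direct, non-inductive construction: root a spanning tree of $H$, take a deepest leaf, and peel off either two sibling leaves or a leaf together with its only-child parent, which simultaneously yields a wedge/triangle $B$ anchored at a single shared vertex and leaves a tree (hence a connected $A$) on the remaining $n-2$ vertices. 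Your case split is sound -- in the only-child case the parent cannot be the root once $n \geq 3$, so the grandparent exists, and removing a deepest leaf and then its parent indeed preserves connectivity. What the paper's induction buys is brevity and a formulation that mirrors the single-vertex-extension view of subgraph growth; what your argument buys is an explicit, locally verifiable decomposition for every $n$ at once, without recursing through smaller sizes, and it also directly establishes the existence of a valid dissection that the smallest-vertex-first automorphism check of \S\ref{sec:redun_removal} relies on (the paper instead cites Theorem~\ref{th:1} for this). Your closing remark about compatibility with redundancy removal is not needed for the theorem itself, since exhaustiveness is a property of the unpruned join and the pruning only discards duplicates of subgraphs already generated via their canonical dissection.
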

\begin{proof}
Our goal is to show that any size-$n$ subgraph can be dissected into a connected size-($n-2$) subgraph and a connected size-$3$ subgraph on one vertex. 
Because we join all size-($n-2$) and size-$3$ subgraphs in all possible ways, if a dissection exists for a size-$n$ subgraph, it will be discovered by the join operation.
Suppose any size-$n$ subgraph can be dissected into a size-$(n-2)$ and a size-$3$ subgraph. 
There are only two way a size-($n+1$) subgraph can be constructed from a size-$n$ subgraph: 1) the new vertex is connected with the size-$(n-2)$ subgraph, and in this case, the size-($n+1$) subgraph can be dissected in the same way as the size-$n$ subgraph; 
2) if the new vertex is only connected with the size-$3$ subgraph, it is easy to verify that for either of the two cases (wedge or triangle), we can always pick three connected vertices as the new dissection. 
As the base case, all the six size-$4$ patterns can be dissected into a size-$3$ subgraph and an edge. 
The proof finishes by induction. 
\end{proof}

Note that multi-vertex exploration is not complete with more than two vertices. 
For example, a seven-vertex three-pronged star graph with two vertices in each prong cannot be obtained by joining any two size-4 subgraphs. 
Therefore, we cannot explore more than two vertices in each step.

Two-vertex exploration can be either vertex-induced or edges induced. 
For vertex-induced exploration, we add all the connecting edges between the two joining subgraphs to the resulting subgraph. 
For edge-induced exploration, we enumerate all possible combinations of the connecting edges between the joining subgraphs and generate a resulting subgraph for each combination.


\subsection{Depth-First Multi-Way Join of Subgraphs}
\label{sec:multi-join}

\begin{figure}[t]
    \centering
    \includegraphics[scale=0.39]{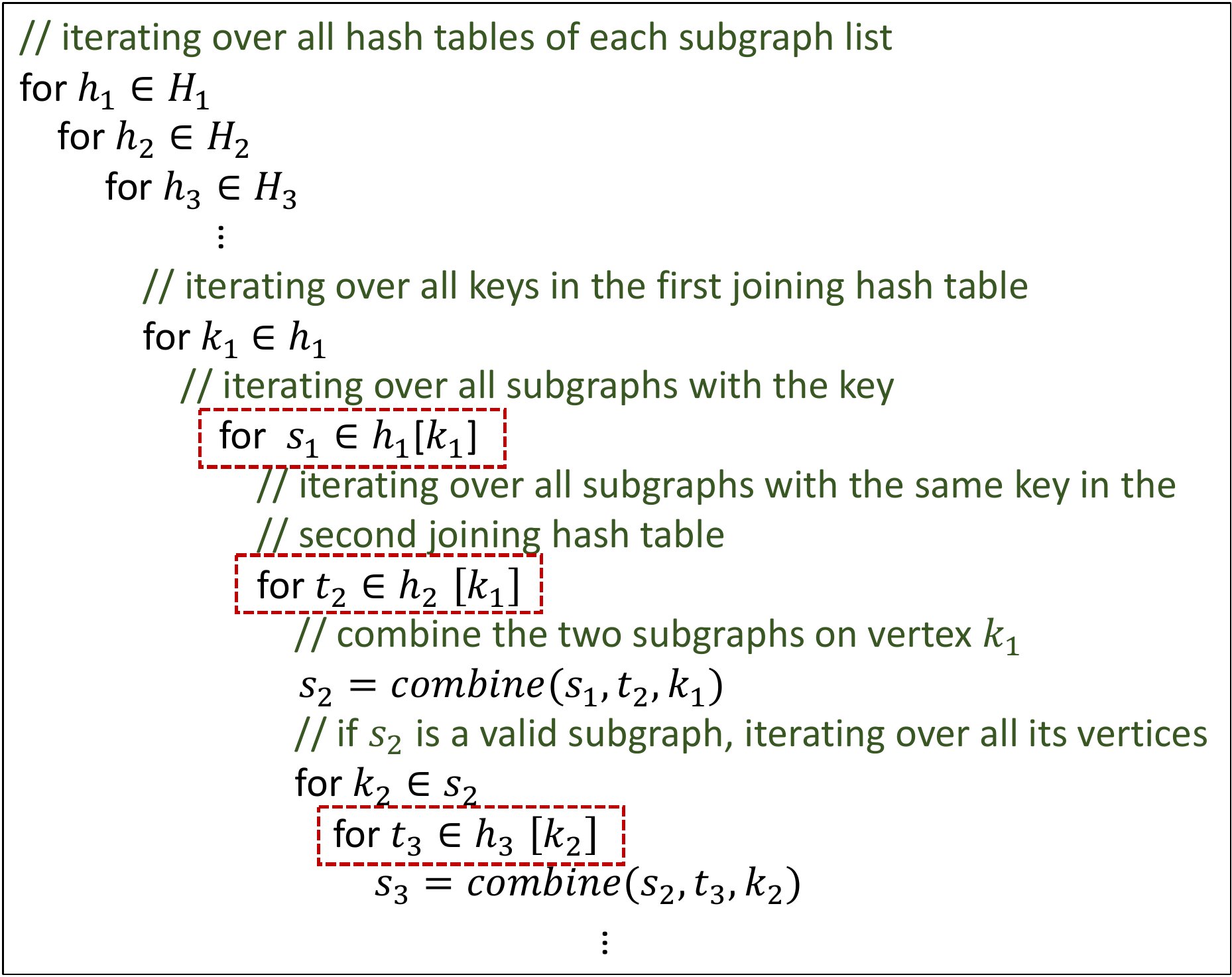}
  \vspace{-1em}
    \caption{Code of multi-way join.}
    \vspace{-.5em}
    \label{fig:join_code}
\end{figure}

To avoid the large memory consumption, we implement the exploration process as a depth-first multi-way join. 
Suppose we want to join $t$ subgraph lists $SL_1, SL_2, ..., SL_t$ and the subgraph in $SL_i$ has $l_i$ vertices. 
For each subgraph list $SL_i$, we group the subgraphs by each of its $l_i$ columns, create $l_i$ hash tables and store the hash tables in $H_i$.  
For example, the size-$3$ subgraphs in Figure~\ref{fig:reg_join} are grouped by the vertex indices in the first column. 
Once the hash tables are created, the multi-way join operation is simply a nested loop that iterates over all possible combinations of subgraphs in different hash tables, as shown in Figure~\ref{fig:join_code}. 
We first enumerate all possible combinations of columns in different subgraph lists by iterating over all hash tables of each subgraph list. 
Then, we identify the matching keys $k_1$ in the first two hash tables and try to combine the subgraphs ($s_1$ and $t_2$) on the key. 
If the two subgraphs make up a valid larger subgraph $s_2$, we iterate over all the vertices of $s_2$ and look up each vertex $k_2$ in the third hash table. 
For every subgraph $t_3$ with key $k_2$ in the third hash table, we  combine $s_2$ with $t_3$ to obtain a larger subgraph. 
For joining more subgraph lists, the code simply repeats the for loop. 

Depth-first join is also used in Fractal~\cite{10.1145/3299869.3319875} for single-vertex exploration. 
The main issue is that it incurs a huge amount of redundant memory accesses. 
Our two-vertex exploration mitigates this issue as it requires fewer join steps to enumerate subgraphs of a certain size. 
To see this, let us consider the exploration of size-5 subgraphs. 
With single-vertex exploration, it requires a 4-way join of the edges in the graph. 
The first join operation of the edge list does not incur any redundant memory accesses as each neighbor list is accessed only once in the two hash tables. 
However, when we join the intermediate size-3 subgraphs with the edge list, we need to query the edge list for each intermediate subgraph. 
For non-consecutive size-3 subgraphs of the same key, 
each neighbor list will be accessed multiple times during the join process. 
The same problem exists when joining size-4 subgraphs with the edge list. 
In contrast, two-vertex exploration obtains size-5 subgraphs by performing a binary join of size-3 subgraphs which incurs no redundant memory accesses. 
Our experimental results also validate this point.

\subsection{Redundancy Removal through Smallest-Vertex-First Dissection}
\label{sec:redun_removal}

\begin{algorithm}[t]
\small
\SetKwInOut{Input}{Input}
        \SetKwInOut{Output}{Output}
        \Input{$\;\;\;\text{subgraph } s; \text{subgraph } t; \text{joining key } k$} 
        \Output{$\;\;\;\text{combined subgraph } s'$}

 \SetKwFunction{Fdissect}{dissect}
  \SetKwProg{Fn}{func}{:}{}
  \Fn{\Fdissect{$s'$, $n$}}{
\ForEach{$v$ in $s'$ \text{ in ascending order}}{
$l=$ the first $n$ vertices visited by starting from $v$ and spanning to the smallest vertex at each step\;
$r'=$ the unvisited vertices in $s'$\;
\ForEach{$v'$ in $l$ \text{ in ascending order}}{
$r=r'\cup v'$\;
\lIf{$r$ is connected}{
\Return $l,r$
}
}}}

\lIf{$s$ and $t$ have identical vertices other than $k$}{\Return $\emptyset$}
\tcp{$s'$ is a valid subgraph joined by $s$ and $t$}
$s'=s \cup t$\;
\tcp{find the smallest dissection of $s'$}
$l,r$ = \texttt{dissect}($s'$, $t.size$)\;
\tcp{if the two joining subgraphs correspond to the smallest dissection, return $s'$}
\lIf{$l==t$ \textbf{and} $r==s$}{\Return $s'$}
\lElse{\Return $\emptyset$}
 \caption{Combine two subgraphs and check for automorphism.}
 \label{alg:combine}
\end{algorithm}

Combing small subgraphs in different ways can lead to identical results. 
As we briefly mentioned in Figure~\ref{fig:reg_join}, joining subgraph `$342$' and `$375$' generates the same subgraph as joining `$352$' and `$274$'. 
These redundant subgraphs incur redundant computation, and the redundancy can accumulate over the exploration steps. 
To eliminate the redundant subgraphs, we perform an automorphism check when a subgraph is generated. 
The previous automorphism check technique for single-vertex exploration is based on the concept of the {\em canonicality} of the subgraphs~\cite{10.1145/2815400.2815410}.  
This canonicality check does not work for multi-vertex exploration because the small subgraphs are generated by a matching algorithm and may not have the canonicality property. 
We propose a \textit{smallest-vertex-first} dissection method that enables the redundancy removal for multi-vertex exploration. 

Our method is based on the following observation: for any subgraph, there is only one way to divide it into two smaller  subgraphs with both subgraphs being connected and one of them having the smallest spanning vertex indices. 
Thus, we can eliminate redundancy by producing a subgraph $s'$ only if the two joining subgraphs correspond to this unique dissection of $s'$. 


The automorphism check is performed each time we combine two subgraphs (i.e., in the $combine$ function in Figure~\ref{fig:join_code}). 
Algorithm~\ref{alg:combine} shows the procedure of the $combine$ function. 
For a pair of input subgraphs $s$ and $t$ ($t$ is usually a size-$3$ subgraph), we first check if there are any other identical vertices except for the joining vertex $k$. 
If yes, $s$ and $t$ cannot form a valid subgraph, and the function return an empty set. 
If no, we give the combined subgraph to a dissection procedure that divides the subgraph into two small subgraphs $l$ and $r$. 
From the vertex with the smallest index, the dissection procedure finds the smallest $n$ vertices and store them in $l$ where $n$ is the size of $t$. 
Next, the algorithm checks if the remaining vertices can constitute a connected subgraph $r$ with any of the vertices in $l$. 
If yes, the dissection procedure stops and returns $l$ and $r$. 
The algorithm returns as soon as the first dissection is found, and it will always return because of Theorem~\ref{th:1}. 
Once we have the smallest dissection $l$ and $r$, we check if they are the same as $t$ and $s$. 
If yes, the $combine$ function returns the combined subgraph; otherwise, it returns an empty set.

\textbf{Example: } The smallest-vertex-first dissection of the subgraph `$34257$' in Figure~\ref{fig:reg_graph} can be obtained by spanning from vertex $2$. 
The two adjacent vertices of $2$ are $3$ and $8$. Because $3$ is smaller, we take $3$ in the first step, and the visited set contains vertex $2$ and $3$. 
The vertices that are adjacent to the two visited vertices are $4,5,7,8$. 
Because $4$ is the smallest, we take $4$ in the next step, and we have three vertices $2,3,4$ in $l$. The unvisited vertices are $5$ and $7$. We check if any of $2,3,4$ can form a connected graph with $5,7$, and we find $3$ is the smallest vertex that connects 5 and 7. The algorithm stops and returns $l=\{2,3,4\}$ and $r=\{3,5,7\}$. 
When joining the two subgraph lists in Figure~\ref{fig:reg_join},  our system generates `$34275$' (by combining `$342$' and `$375$') instead of `$35274$' (by combing `$352$' and `$274$'). 
For the same reason, `$37485$' is not generated by combing `$374$' and `$385$' as the smallest dissection of `$37485$' is `$345$' and `$387$'. 

The worst cases complexity of the algorithm is $O(|s'|^3)$. Although it is higher than the linear complexity of the automorphism check for single-vertex exploration~\cite{10.1145/2815400.2815410, 222571}, the actual number of instructions does not increase much because $s'$ is small and the algorithm usually returns early at line 7. 

\vspace{-.5em}
\subsection{Pattern Aggregation with Index-based Quick Pattern}
\label{sec:quick_pattern}

Next, we need to aggregate the subgraphs according to their patterns. 
This is done by computing the canonical form of each subgraph. 
The subgraphs with the same canonical form are isomorphic and will be put in the same group. 
As pointed out in \S\ref{sec:appl}, computing the canonical form is expensive, especially for large patterns. 
Previous work has used a quick pattern technique to reduce the canonical form computation. 
However, their quick patterns encode little topological information of the subgraphs, resulting in a lot of quick pattern groups of isomorphic subgraphs.  

We propose an {\em index-based quick pattern} technique that can achieve more accurate grouping of  subgraphs and reduce the overhead of canonical form computation.  
The idea is to assign an index to each pattern in a subgraph list and use the indices for computing the quick pattern of the combined subgraph. 
If a subgraph list is generated by the matching algorithm, we simply index the input patterns and store the indices with each subgraph. 
As shown in Figure~\ref{fig:reg_size3}, the size-$3$ subgraphs are obtained by matching the two size-$3$ patterns. 
We store the $pattern\_idx$ with each of its embeddings. 
When two subgraphs are combined, we construct a 4-tuple as the quick pattern for the combined subgraph. 
The first two elements in the 4-tuple are the pattern indices of the two joining subgraphs. 
The third element represents the position of the joining vertex in the two subgraphs.  
Suppose the two joining subgraphs $s_1$ and $s_2$ are of size $n_1$ and $n_2$. 
If the joining vertex is the $i$th vertex in $s_1$ and the $j$th vertex in $s_2$, then the value of the third element is ($i\times n_2+j$). 
The last element is a bitarray representing connections between the two subgraphs.  
If the $i$th vertex in $s_1$ is connected with the $j$th vertex in $s_2$, then the ($i\times n_2+j$)th bit in the bitarray is set. 

\textbf{Example: } In Figure~\ref{fig:reg_join}, the resulting subgraph `$34275$' is obtained by joining $s_1=`342$' and $s_2=`375$', and its quick pattern is $\langle 0,0,0,32\rangle$. 
The first two elements are the pattern index of `$342$' and `$375$'. The third element is 0 because the joining vertex is at position 0 in both subgraphs. The last element is $32$ because the $s_1[1]=4$ is connected with $s_2[2]=5$ in the graph and the $(1\times3+2)$th bit is set in the bitarray. 
Similarly, the quick pattern of both `$34287$' and `$35287$' is $\langle 0,0,0,128\rangle$, and the quick pattern of `$34285$' is $\langle 0,0,0,272\rangle$. 

By encoding the sub-pattern information, our quick pattern achieves more accurate grouping of the subgraphs and thus reduces the canonical form computation. 
The computation is further reduced by multi-vertex exploration as larger subgraphs contains more accurate sub-pattern information. 
To see this point, let us consider the number of possible size-4 unlabeled patterns. 
We have known that any size-4 subgraph can be obtained by joining a size-3 subgraph and an edge. 
The total number of possible 4-tuples with our index-based quick pattern is 48 ($=2\times 1\times 6\times 4$) where $2$ represents there are two types of size-3 subgraphs (i.e., triangle and wedge), $6$ is the number of possible joining positions, and $4$ is the number of possible values of the last element in the 4-tuple. 
In comparison, if we use the edge list as the quick pattern as in previous work~\cite{10.1145/2815400.2815410, 222571}, the fully-connected size-4 graph alone has 624 ($=6!-4\times 4!$) possible quick patterns where $6!$ represents all possible permutations of the six edges and $4\times 4!$ represents the permutations that do not have adjacent edges. 
This indicates that our index-based technique has much fewer possible patterns compared with the technique used in previous work, leading to fewer groups for isomorphism check. 

The quick pattern is computed after every $combine$ function in Figure~\ref{fig:join_code}. 
If the $combine$ function returns a valid subgraph, we compute its quick pattern and look for the quick pattern in a global dictionary. 
The dictionary keeps a mapping from quick patterns to their indices. 
If the quick pattern exists, we store its index with the subgraph. 
If a quick pattern is not found in the dictionary, we increase the global index number and insert a new pair of quick pattern and its index. 
In our implementation, we parallelize the for-loop that iterates over all keys in the first joining hash table. 
To avoid synchronization among threads, we store a quick pattern dictionary for each thread.

\subsection{Exploration Space Pruning}
\label{sec:pruning}

\begin{figure}[t]
    \centering
    \includegraphics[scale=0.39]{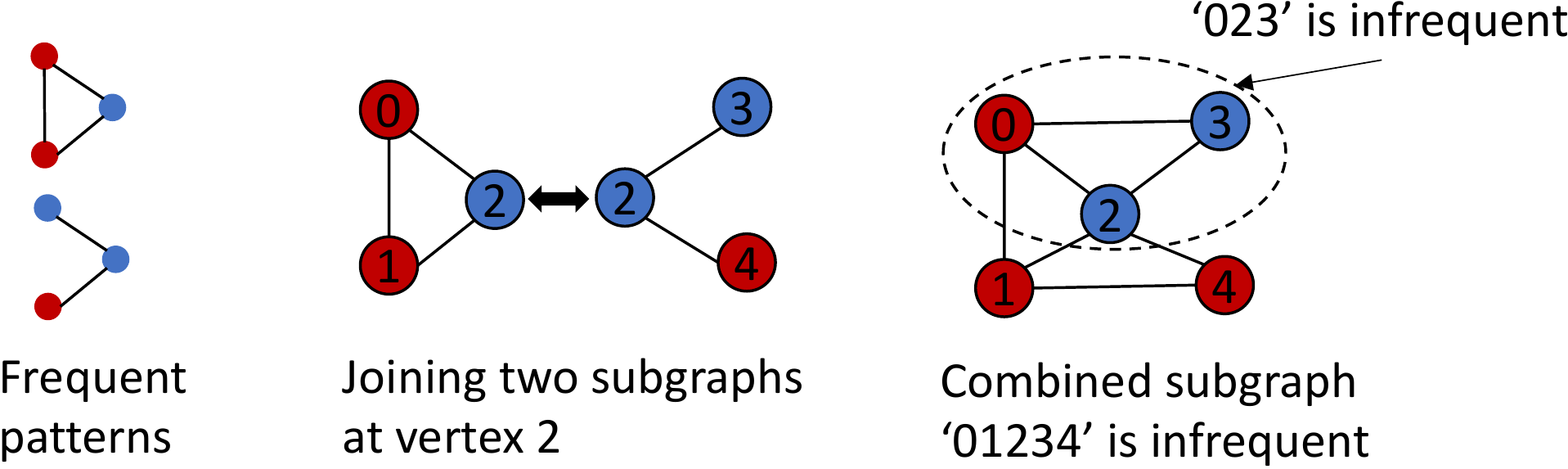}
    \vspace{-0.5em}
    \caption{An example of exploration space pruning. }
    \label{fig:pruning}
        \vspace{-0.4em}
\end{figure}

An optimization that most graph mining systems adopt for frequent subgraph mining is to filter out the subgraphs of infrequent patterns so as to reduce the subgraph exploration space~\cite{10.1145/2815400.2815410, 222571, 10.14778/3389133.3389137, 10.1145/3342195.3387548}. 
All of the existing systems achieve this optimization with breadth-first exploration. 
They either store all intermediate subgraphs (e.g., RStream~\cite{222571}, Pangolin~\cite{10.14778/3389133.3389137}) or maintain a list of frequent patterns and re-match these pattern (e.g., Peregrine~\cite{10.1145/3342195.3387548}) in each exploration step. 
The problem with the first approach is that it takes a lot memory and needs to aggregate the subgraphs in each step.  
The problem with the second approach is that it needs to perform redundant matching in each step, and it only works for support measures that can be computed without storing all the embeddings (e.g., MNI). 
If the user wants to use more accurate support measures (e.g., MIS, MVC~\cite{10.1145/3035918.3035936}), the second approach will not work. 
An advantage of two-vertex exploration is that it enables exploration space pruning without storing intermediate results or re-matching. 

Our main idea is that, instead of checking the support of the combined pattern, we check whether the vertices around the joining point form any subgraphs of smaller infrequent patterns. 
If an infrequent subgraph is found, then the combined subgraph must be infrequent and should be discarded.  
Figure~\ref{fig:pruning} shows an example of this method. 
When the system tries to join two subgraph `012' and `234' at vertex 2, if finds that there is an edge connecting vertex 0 and 3 and an edge connecting vertex 1 and 4. 
This forms two triangles `023' and `124'. While triangle `124' is frequent, triangle `023' is not, according to the list of frequent size-3 patterns. 
Due to the anti-monotone property of the support measure, a frequent pattern cannot contain infrequent subpatterns. 
Thus, the combined subgraph `01234' must be infrequent and should not be used for further exploration.

The above pruning procedure is done in the $combine$ function (line 9 in Algorithm~\ref{alg:combine}) when we check the connectivity among vertices of the two joining subgraphs. 
For any size-3 subgraph `abc' with `a' being the joining vertex and `b', `c' from different subgraphs, if the subgraph is not in the list of frequent size-3 patterns, the $combine$ function returns an empty set immediately.


\vspace{-1em}
\section{Subgraph Sampling for Faster Exploration}
\label{sec:approx_join}

In real applications, we may not need to find all frequent patterns, and exhaustive exploration is unnecessary~\cite{222637}. 
Thus, we propose a {\em subgraph sampling} technique to accelerate the exploration process. 

\textbf{Sampling during Joining:}
The idea is to add a sampling operation each time we iterate over the joining subgraphs, i.e., before each for-loop in the dotted boxes in Figure~\ref{fig:join_code}. 
Because the MNI support measures the frequency of a pattern as the  number of distinct matching vertices, 
we sample a fixed number of iterations in each of the boxed for-loops in Figure~\ref{fig:join_code}, in order to achieve a more even distribution of  subgraphs over all vertices. 
If a loop has fewer iterations than the sampling threshold, we execute all of them; if a loop has more iterations than the threshold, we sample the iterations uniformly to the threshold number. 
This subgraph sampling during the joining phase can be considered as a generalization of the {\em neighbor sampling} technique in ASAP~\cite{222637}. 
ASAP samples a subset of the edges when it extends the matched subgraph from one vertex to its neighbors. 
We sample the neighboring size-3 subgraphs instead. 
Intuitively, our subgraph sampling is more accurate than neighbor sampling because size-3 subgraphs preserve more graph structures than edges. 

\textbf{Sampling during Matching: }
For very large graphs, we may not be able to store all size-3 subgraphs in memory or even on disk. 
To achieve fast mining, we can sample the subgraphs during the matching phase and only store the sampled subgraphs.  
Similar to the sampling in the joining phase, we sample a fixed number of subgraphs around each vertex in order to have subgraphs evenly distributed over all vertices. 
More specifically, we permute the vertex list at each inner loop of the nested for-loop generated by AutoMine~\cite{10.1145/3341301.3359633}. 
The execution continues to the next iteration of the outermost loop if $t$ subgraphs have been matched in the current iteration. 
This will give us $t$ subgraphs sampled from each vertex. 
We set $t$ to a number such that all the sampled subgraphs can be stored in memory. 
These sampled size-3 subgraphs are then given to the join procedure to explore larger subgraphs. 
This subgraph sampling during the matching phase can be considered as a generalization of the {\em edge sampling} technique for approximate graph processing~\cite{agarwal2013blinkdb, zou2010frequent}. 
Previous work has shown that edge sampling does not work well for graph mining tasks~\cite{222637}. 
Our subgraph sampling is much more robust than edge sampling for graph pattern mining as it preserves more structures of the graph. Our experiments also validate this point.

\section{Experimental Results}

This section presents our experimental setup and performance comparison with the existing graph mining systems and methods. 

\subsection{Experimental Setup}

\noindent \textbf{Platform: } 
We run all the experiments on
a workstation with an Intel Xeon W-3225 CPU containing 8 physical cores (16 logical
cores with hyper-threading), 196GB memory, and a 4TB SSD. 
We use GCC 7.3.1 for compilation with optimization level O2 enabled. 
All the systems are configured to run with 16 threads.  
We use OpenMP to parallelize the for-loop that iterates over all keys in the first joining hash table.

\begin{table}[t]
\caption{Graph datasets}
\vspace{-1em}
\centering
\footnotesize
\begin{tabular}{c|c|c|c}
Graphs       & \#vertices & \#edges & Description          \\ \hline\hline
CiteSeer (CI)~\cite{10.14778/2732286.2732289}      & 3264       & 4536    & Publication citation \\ \hline
MiCo (MI)~\cite{10.14778/2732286.2732289}          & 100K     & 1.1M & Co-authorship        \\ \hline
Orkut (OK)~\cite{orkut}         & 3.1M       & 117.2M  & Social network       \\ \hline
UK-2005 (UK)~\cite{youtube}       & 39M       & 936M      & Social network       \\ \hline
Friendster (FR)~\cite{yang2015defining}       & 65M       & 1.8B      & Social network       \\ \hline
\end{tabular}
\label{tab:datasets}
\end{table}

\noindent \textbf{Datasets: }
We test on five graphs as listed in Table.~\ref{tab:datasets}. 
These graphs are commonly used for evaluating performance of graph mining systems. 
CiteSeer and MiCo are labeled, and the other four are unlabeled. 
For the unlabeled graphs, we randomly assign $30$ labels to the vertices. 

\noindent \textbf{Settings: }
We compare our system with three state-of-the-art graph mining systems:  Peregrine~\cite{10.1145/3342195.3387548} and AutoMine~\cite{10.1145/3341301.3359633} which represent the pattern-based systems, and Pangolin~\cite{10.14778/3389133.3389137} which represents the explore-aggregate-filter systems.   
We run edge-induced FSM since it is more commonly evaluated by the existing graph mining systems~\cite{222571, 10.1145/3299869.3319875, 10.1145/3342195.3387548}. 
The original code of AutoMine only supports vertex-induced FSM (which has much less computation than edge-induced FSM) and uses number of embeddings as the support measure (which is not anti-monotone). We adapt the code to support edge-induced FSM with MNI support, and we use it to find all size-3 subgraphs for our two-vertex exploration.

For most graphs, we set the MNI support threshold $t = 0.001n$, $0.005n$, $0.01n$ and $0.05n$ where $n$ is the number of nodes in the graph. 
The reason we use  proportional thresholds is that the MNI support measures frequency as the number of distinct vertices~\cite{10.1145/3035918.3035936}. 
The threshold means that if every vertex in a pattern maps to at least $t$ different vertices in the graph, we consider the pattern frequent. 
For UK and FR, because $n$ is large,  there are few patterns that can meet threshold $0.001n$. Therefore, we test with $0.0001n$ and $0.0005n$ on UK and FR.

\subsection{Performance without Sampling}

\begin{table}[t]
\centering

\caption{Execution times in seconds. Systems: Two-Vertex exploration (TV), 
Peregrine (PR), AutoMine (AM) and Pangolin (PG). 
`T' represents timeout after 24 hours of execution. 
`F' execution failure due to insufficient memory or disk space. 
}
\label{tab:fsm_time}
\vspace{-.8em}
\footnotesize
\begin{tabular}{c|c|c||c|c|c|c}

\multicolumn{1}{c|}{ Size}                   & { Support}  & \multicolumn{1}{c||}{ Gr.}                   & \multicolumn{1}{c|}{ TV}    & { PR}                 & {AM}                     & { PG}                 \\ \hline \hline
\multicolumn{1}{c|}{\multirow{4}{*}{4-FSM}} & 0.001 & \multicolumn{1}{c||}{\multirow{4}{*}{CI}} & \multicolumn{1}{c|}{0.99}  & 5.4                & \multirow{4}{*}{5.1}   & 5.5                \\ \cline{2-2} \cline{4-5} \cline{7-7} 
\multicolumn{1}{c|}{}                       & 0.005 & \multicolumn{1}{c||}{}                    & \multicolumn{1}{c|}{0.89}  & 4.8                &                        & 4.8                \\ \cline{2-2} \cline{4-5} \cline{7-7} 
\multicolumn{1}{c|}{}                       & 0.01  & \multicolumn{1}{c||}{}                    & \multicolumn{1}{c|}{0.81}  & 3.4                &                        & 3.7                \\ \cline{2-2} \cline{4-5} \cline{7-7} 
\multicolumn{1}{c|}{}                       & 0.05  & \multicolumn{1}{c||}{}                    & \multicolumn{1}{c|}{0.61}  & 1.1                &                        & 2.8                \\ \hline
\multicolumn{1}{c|}{\multirow{4}{*}{4-FSM}} & 0.001 & \multicolumn{1}{c||}{\multirow{4}{*}{MI}} & \multicolumn{1}{c|}{41645} & \multirow{4}{*}{F} & \multirow{4}{*}{78244} & \multirow{4}{*}{F} \\ \cline{2-2} \cline{4-4}
\multicolumn{1}{c|}{}                       & 0.005 & \multicolumn{1}{c||}{}                    & \multicolumn{1}{c|}{32763} &                    &                        &                    \\ \cline{2-2} \cline{4-4}
\multicolumn{1}{c|}{}                       & 0.01  & \multicolumn{1}{c||}{}                    & \multicolumn{1}{c|}{29698} &                    &                        &                    \\ \cline{2-2} \cline{4-4}
\multicolumn{1}{c|}{}                       & 0.05  & \multicolumn{1}{c||}{}                    & \multicolumn{1}{c|}{25682} &                    &                        &                    \\ \hline
\multirow{4}{*}{5-FSM}                       & 0.001 & \multirow{4}{*}{CI}                      & 25.2                       & \multirow{4}{*}{F} & \multirow{4}{*}{68.2}  & \multirow{4}{*}{F} \\ \cline{2-2} \cline{4-4}
                                             & 0.005 &                                          & 22.1                       &                    &                        &                    \\ \cline{2-2} \cline{4-4}
                                             & 0.01  &                                          & 21.5                       &                    &                        &                    \\ \cline{2-2} \cline{4-4}
                                             & 0.05  &                                          & 16.9                       &                    &                        &                    \\ \hline
\multirow{4}{*}{6-FSM}                       & 0.001 & \multirow{4}{*}{CI}                      & 615                        & \multirow{4}{*}{F} & \multirow{4}{*}{1924}  & \multirow{4}{*}{F} \\ \cline{2-2} \cline{4-4}
                                             & 0.005 &                                          & 597                        &                    &                        &                    \\ \cline{2-2} \cline{4-4}
                                             & 0.01  &                                          & 564                        &                    &                        &                    \\ \cline{2-2} \cline{4-4}
                                             & 0.05  &                                          & 416                        &                    &                        &                    \\ \hline
\multirow{4}{*}{7-FSM}                       & 0.001 & \multirow{4}{*}{CI}                      & 26760                      & \multirow{4}{*}{F} & \multirow{4}{*}{63362} & \multirow{4}{*}{F} \\ \cline{2-2} \cline{4-4}
                                             & 0.005 &                                          & 24644                      &                    &                        &                    \\ \cline{2-2} \cline{4-4}
                                             & 0.01  &                                          & 23697                      &                    &                        &                    \\ \cline{2-2} \cline{4-4}
                                             & 0.05  &                                          & 16257                      &                    &                        &                    \\ \hline
\end{tabular}
\vspace{-.5em}
\end{table}

Since none of the compared systems supports sampling, we first run our algorithm without  sampling to compare the performance. 
Table~\ref{tab:fsm_time} summarizes the execution time of FSM for which at least one of the compared systems can return result within 24 hours.  
The execution time of our system reported here is the time of Step 2,3,4 as described in Section~\ref{sec:overview}. 
We do not include the time for Step1 because 1) it is negligible on these two graphs (0.08 seconds on CI and 102 seconds on MI) compared with the joining time, and 2) Step1 can be considered as preprocessing. 
We find that Peregrine and Pangolin abort for most tasks.  
In fact, Peregrine paper~\cite{10.1145/3342195.3387548} only reports results of 3-FSM. Pangolin~\cite{10.14778/3389133.3389137} reports results mostly for 3-FSM. It reports 4-FSM for only one graph using large support thresholds, but it fails to give result for MI. 
For the only one testcase (4-FSM on CI) that Peregrine and Pangolin do return, our system is 1.8x to 5.6x faster. 
AutoMine is able to return results for these tasks. However, because it matches the patterns in a depth-first order, it cannot benefit from the anti-monotone property (i.e., it does not run faster for larger support thresholds). 
Our system is 1.9x to 8.4x faster than AutoMine for these tasks. 

\begin{figure*}[t]
    \centering
    \includegraphics[scale=0.45]{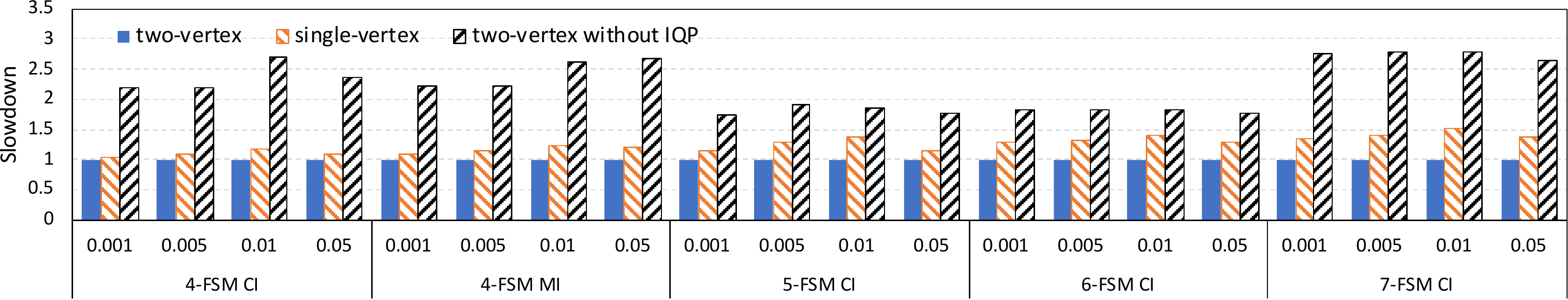}
    \caption{Performance of two-vertex exploration, single-vertex exploration, and two-vertex exploration without our index-based quick pattern. The Execution times are normalized for each task with the execution time of two-vertex exploration in Table~\ref{tab:fsm_time}. }
        \vspace{-1em}
    \label{fig:slowdown}
\end{figure*}

\begin{figure}[t]
    \centering
    \includegraphics[scale=0.45]{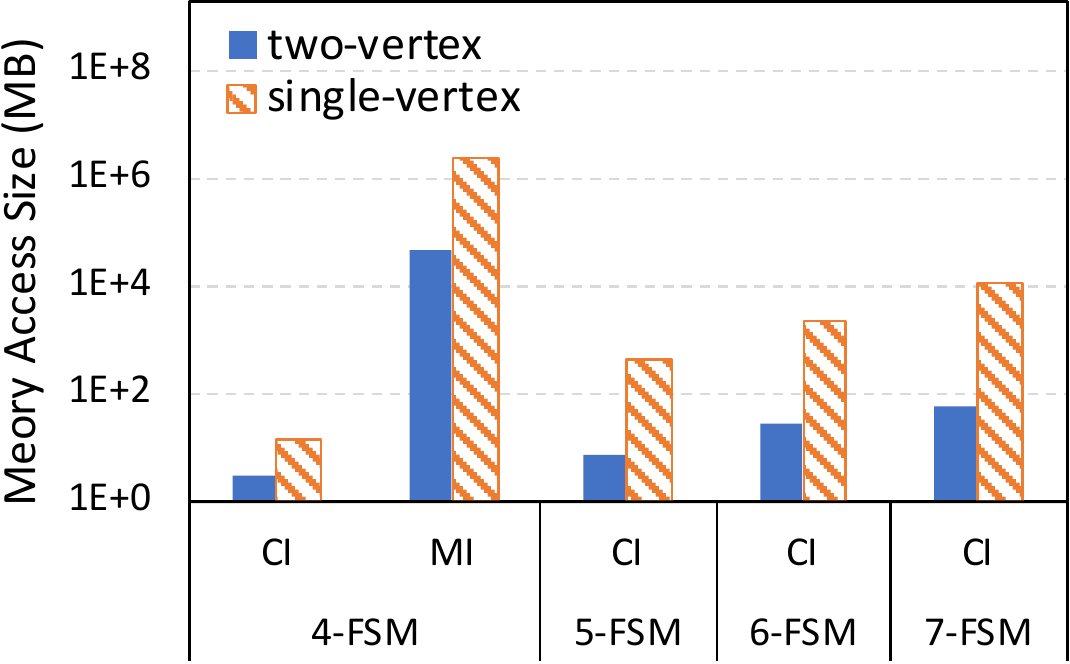}
        \vspace{-.5em}
    \caption{Total memory access size in multi-way join with two-vertex and single-vertex exploration for different FSM tasks (MNI support threshold $0.001n$). }
        \vspace{-1em}
    \label{fig:memory_load}
\end{figure}

\noindent \textbf{Advantage over Single-Vertex Exploration: }
As discussed in Section~\ref{sec:multi-join}, one advantage of two-vertex exploration over single-vertex exploration is that it reduces the memory access overhead in depth-first multi-way join.  
To show the advantage, we configure our system to run single-vertex exploration. 
The single-vertex version still uses our index-based quick patterns, but it does not support exploration space pruning since the size-3 subgraphs are not computed. 
The execution times of single-vertex exploration are shown in
Figure~\ref{fig:slowdown}. 
We can see that single-vertex exploration is 1.02x to 1.52x slower than two-vertex exploration. 
We also collect the total memory access sizes to the hash tables with two-vertex exploration and single-vertex exploration (assuming every query to the hash tables is a cache miss). 
As shown in Figure~\ref{fig:memory_load},
 two-vertex exploration reduces the memory access overhead by 5x to 189x. 
The results are collected with support threshold $0.001n$. Other support thresholds show a similar pattern.


\begin{figure}[t]
    \centering
    \includegraphics[scale=0.45]{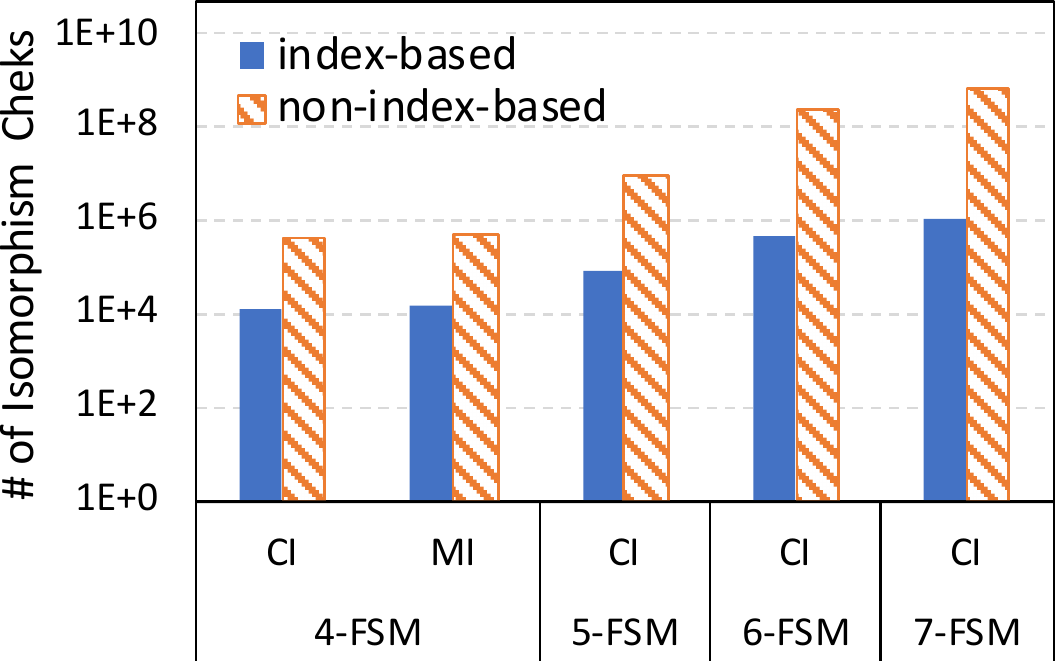}
    \caption{Number of isomorphism checks for different FSM tasks (MNI support threshold $0.001n$) with and without our index-based quick pattern technique. }
    \label{fig:iso_checks}
\end{figure}

\noindent \textbf{Benefit of Index-Based Quick Pattern: }
To show the benefit of our index-based quick pattern technique, we disable our index-based quick pattern and use the quick pattern technique in previous work instead (i.e., a list of edges with labels of adjacent nodes). 
Figure~\ref{fig:slowdown} shows the execution times of two-vertex exploration without our index-based quick pattern. 
We can see that it leads to 1.75x to 2.78x slowdown.  
To further verify the advantage, 
we collect the number of invocations to the bliss function~\cite{JunttilaKaski:ALENEX2007} for computing the canonical forms of subgraphs. 
As shown in Figure~\ref{fig:iso_checks}, our index-based quick pattern reduces the number of isomorphism checks by 31x to 564x for different tasks, which explains the speedups.

\subsection{Performance with Sampling}

Next, we evaluate the effectiveness of our sampling methods. 
Since all the size-3 subgraphs of CI and MI can be stored in memory, we only perform sampling during the joining phase for these two graphs.

\begin{figure}[t]
    \centering
    \includegraphics[scale=0.45]{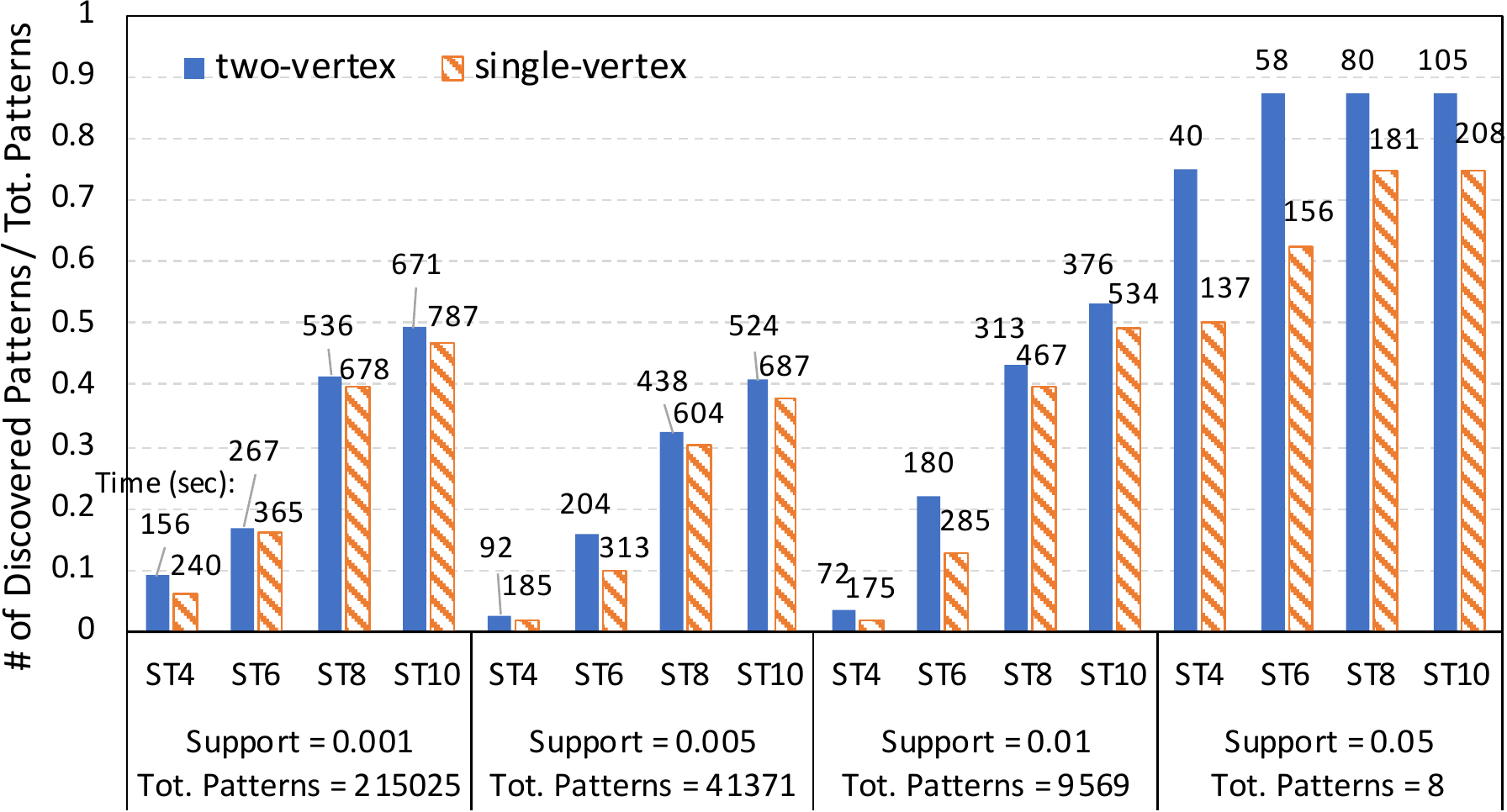}
    \vspace{-.5em}
    \caption{Number of discovered size-4 frequent patterns on MI graph with different support thresholds and different sampling thresholds. For single-vertex exploration, ST$x$ means in every join step only $x$ edges are sampled in each neighbor list. For two-vertex exploration, it means that $x$ edges  and $x^2$ size-3 subgraphs are sampled in each key group when we join the edge list with the size-3 subgraphs. }
    \label{fig:fsm_s_mi4}
\end{figure}

\begin{figure}[t]
    \centering
    \includegraphics[scale=0.45]{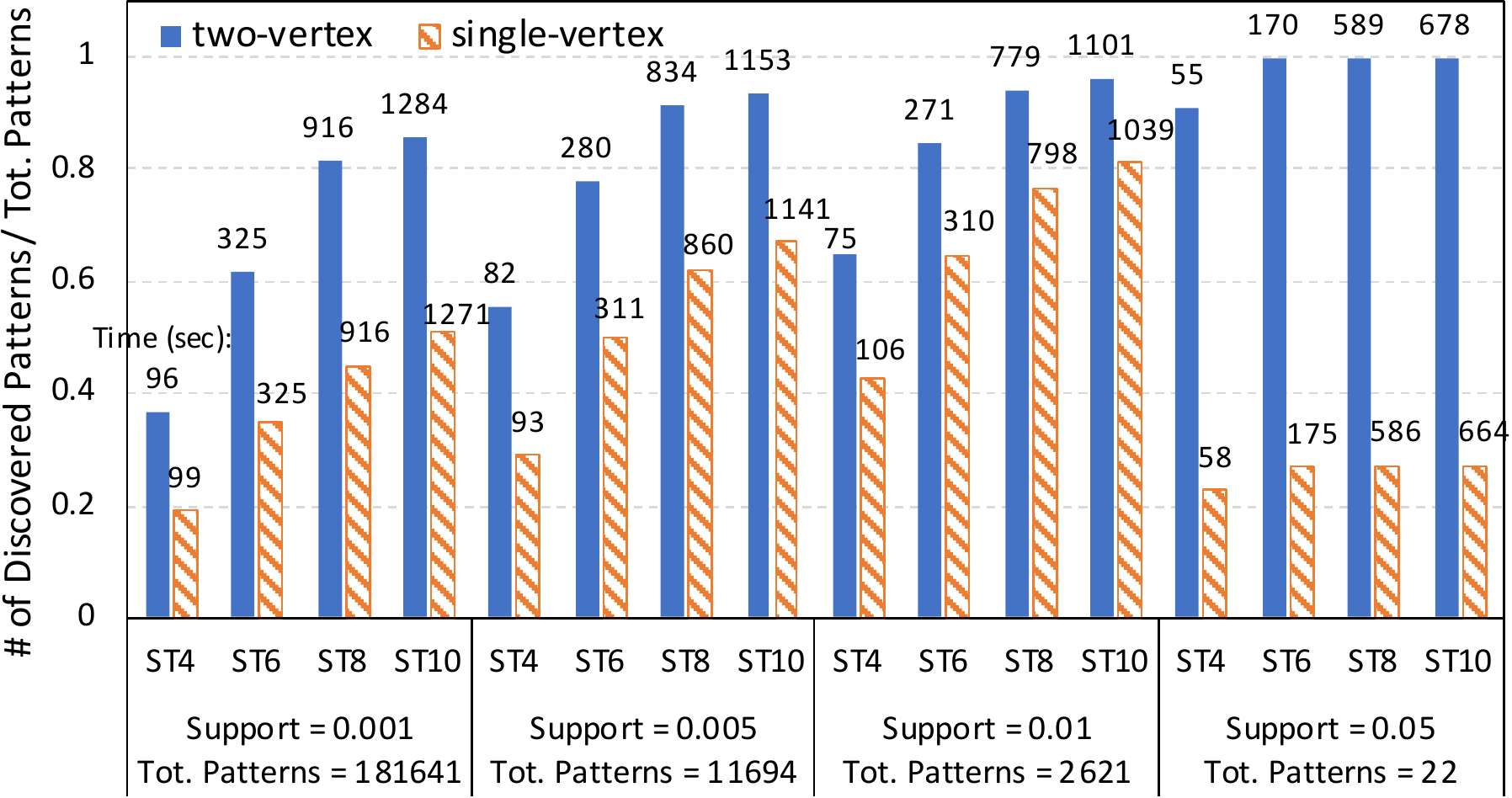}
    \vspace{-.5em}
    \caption{Number of discovered size-7 frequent patterns on CI graph with different support thresholds and different sampling thresholds. For single-vertex exploration, ST$x$ means in every join step only $x$ edges are sampled in each neighbor list. For two-vertex exploration, it means that in every join step only $x^2$ size-3 subgraphs are sampled in each key group. }
    \label{fig:fsm_s_ci7}
\end{figure}

Figure~\ref{fig:fsm_s_mi4} shows the number of size-4 frequent patterns we can find on MI graph with different support thresholds and different sampling thresholds. 
The execution time of different runs are labeled on top of the bars. 
When the support threshold is set to $0.001n$, there are 215025 frequent patterns in total, and to discover all these patterns precisely our system needs to run for 41645 seconds (as shown in Table~\ref{tab:fsm_time}).  
If we sample $10$ edges and $100$ size-3 subgraphs in each key group when we join the edge list and the size-3 subgraphs (ST10 in the figure), our two-vertex exploration returns 49\% of the frequent patterns in 671 seconds. The execution time is reduced by 62x.  
When the support threshold is set to $0.005n$, we can find 41\% of the total frequent patterns within 524 seconds, which is 1/63 of the total execution time. 
When the support threshold is set to $0.05n$, there are only 8 frequent patterns, and our two-vertex exploration with ST6 sampling can find 7 of them in 58 seconds, which leads to a 443x speedup compared with the accurate execution. 
The figure also shows that the larger sampling thresholds we use the more frequent patterns we can find. 

Figure~\ref{fig:fsm_s_ci7} shows the number of size-7 frequent patterns found on CI graph with different support thresholds and different sampling thresholds. 
When the support threshold is $0.001n$, our two-vertex exploration can find 86\% of the frequent patterns in 1284 seconds with ST10 sampling. 
Compared with the time of accurate execution in Table~\ref{tab:fsm_time}, sampling achieves a 21x speedup. 
When the support threshold is set to $0.005n$ and $0.01n$, there are fewer frequent patterns, and our two-vertex exploration with ST10 sampling can find more than 90\% of the frequent patterns with less than 1/21 of the total execution time. 
When the support threshold is $0.05n$, there are only 22 frequent patterns, and our two-vertex exploration with ST6 sampling can find all of them within 170 seconds, which is 1/96 of the accurate execution time.

\begin{figure}
\centering
\captionof{table}{Results of 9-FSM on CI graph with sampling threshold 4 for two-vertex exploration (TV) and sampling threshold 2 for single-vertex exploration (SV). }
\vspace{-.6em}
\subfloat[Number of discovered patterns]{
\scriptsize
\begin{tabular}{c||c|c|c|c}
Support & 0.001    & 0.005 & 0.01 & 0.05     \\ \hline\hline
TV &63941 & 6050 & 1770 & 16 \\ \hline
SV & 402 &  0 & 0 & 0 \\ \hline
\end{tabular}
}\hfil
\subfloat[Execution time (sec)]{
\scriptsize
\begin{tabular}{c|c|c|c}
 0.001    & 0.005 & 0.01 & 0.05     \\ \hline\hline
 73 & 71 & 54 & 45 \\ \hline
 75 &  75 & 63 & 46 \\ \hline
\end{tabular}
}
\label{tab:fsm9}
\end{figure}

\noindent \textbf{Advantage over Single-Vertex Exploration: } 
As discussed in Section~\ref{sec:approx_join}, another advantage of two-vertex exploration over single-vertex exploration is that it leads to more accurate sampling.  
To verify this, we configure our system to run sampled single-vertex exploration. 
Since single-vertex exploration needs twice join steps as two-vertex exploration, we set its sampling threshold to the square root of the threshold for two-vertex exploration in order to achieve a similar size of overall exploration space. 
As shown in Figure~\ref{fig:fsm_s_mi4} and~\ref{fig:fsm_s_ci7}, if we do not include the matching time, two-vertex exploration has a slightly shorter execution time than single-vertex exploration when they use the corresponding sampling thresholds. 
Even if we add the time for matching size-3 subgraphs (102 seconds on MI and 0.08 seconds on CI), the total execution time is close to that of single-vertex exploration. 
For 4-FSM on MI, two-vertex exploration finds 6\% to 57\% more frequent patterns than single-vertex exploration. 
For 7-FSM on CI,  the number of frequent patterns found by two-vertex exploration is 1.18x to 4x that of single-vertex exploration. 
Table~\ref{tab:fsm9} shows the number of size-9 frequent patterns found on CI graph with different support thresholds. 
We can see that with a similar execution time two-vertex exploration discovers much more frequent patterns than single-vertex exploration. 
It is worth noting that none of the previous systems can return results for 9-FSM even on a small graph like CI. 
AutoMine cannot even enumerate all the size-9 unlabeled patterns in 24 hours. 

\begin{figure}
\centering
\captionof{table}{Results of 5-FSM on OK graph with support threshold $0.001n$ and different sampling thresholds. `M. ST' stands for Matching Sampling Threshold, `M. Time' stands for Matching Time, `J. ST' stands for Joining Sampling Threshold, `J. Time' stands for Joining Time.}
\vspace{-.6em}
\subfloat[Two-vertex exploration]{
\scriptsize
\begin{tabular}{c|c|c|c|c}
M. ST               & M. Time (sec)             & J. ST & J. Time (sec) & \# of Patterns \\ \hline\hline
\multirow{2}{*}{2}  & \multirow{2}{*}{119} & 4     & 157     & 18             \\ \cline{3-5} 
                    &                      & 16    & 168     & 18             \\ \hline
\multirow{2}{*}{4}  & \multirow{2}{*}{119} & 4     & 207     & 20             \\ \cline{3-5} 
                    &                      & 16    & 216     & 20             \\ \hline
\multirow{2}{*}{8}  & \multirow{2}{*}{120} & 4     & 268     & 20             \\ \cline{3-5} 
                    &                      & 16    & 269     & 20             \\ \hline
\multirow{2}{*}{16} & \multirow{2}{*}{122} & 4     & 299     & 20             \\ \cline{3-5} 
                    &                      & 16    & 318     & 20             \\ \hline
\end{tabular}
\label{tab:fsm5_ok_two}
}

\subfloat[Single-vertex exploration]{
\scriptsize
\begin{tabular}{c|c|c}

J. ST & J. Time & \# of Patterns \\ \hline\hline
1     &     1184    &       15         \\ \hline
2     &   6780      &      18          \\ \hline
\end{tabular}
\label{tab:fsm5_ok_single}
}
\end{figure}

\begin{figure}
\centering
\captionof{table}{Results of 5-FSM on UK and FR graph with different sampling thresholds and support thresholds. `M. ST' stands for Matching Sampling Threshold, `M. Time' stands for Matching Time, `J. ST' stands for Joining Sampling Threshold, `J. Time' stands for Joining Time.}
\label{tab:uk_fr}
\vspace{-.6em}
\subfloat[Two-vertex exploration on UK]{
\scriptsize
\begin{tabular}{c|c|c|c|c|c}

Support                 & M. ST & M. Time (sec)        & J. ST & J. Time (sec) & \# of Patterns \\ \hline \hline
\multirow{4}{*}{0.0001} & \multirow{2}{*}{2}          & \multirow{2}{*}{2254} & 4                         & 140          & 76             \\ \cline{4-6} 
                        &                             &                       & 16                        & 288         & 143            \\ \cline{2-6} 
                        & \multirow{2}{*}{4}          & \multirow{2}{*}{2530} & 4                         & 185          & 105             \\ \cline{4-6} 
                        &                             &                       & 16                        & 505         & 188            \\ \hline
\multirow{4}{*}{0.0005} & \multirow{2}{*}{2}          & \multirow{2}{*}{2254} & 4                         & 105          & 1              \\ \cline{4-6} 
                        &                             &                       & 16                        & 235         & 3              \\ \cline{2-6} 
                        & \multirow{2}{*}{4}          & \multirow{2}{*}{2530} & 4                         & 138          & 8              \\ \cline{4-6} 
                        &                             &                       & 16                        & 421         & 14             \\ \hline
\end{tabular}
\label{tab:fsm5_uk}}

\subfloat[Two-vertex exploration on FR]{
\scriptsize
\begin{tabular}{c|c|c|c|c|c}

Support                 & M. ST              & M. Time (sec)              & J. ST & J. Time (sec) & \# of Patterns \\ \hline\hline
\multirow{4}{*}{0.0001} & \multirow{2}{*}{2} & \multirow{2}{*}{6010}  & 4     & 499     & 1              \\ \cline{4-6} 
                        &                    &                        & 16    & 7620   & 1              \\ \cline{2-6} 
                        & \multirow{2}{*}{4} & \multirow{2}{*}{6657} & 4     & 547     & 4              \\ \cline{4-6} 
                        &                    &                        & 16    & 9955   & 8              \\ \hline
\multirow{4}{*}{0.0005} & \multirow{2}{*}{2} & \multirow{2}{*}{6010}  & 4     & 315     & 1              \\ \cline{4-6} 
                        &                    &                        & 16    & 4637    & 1              \\ \cline{2-6} 
                        & \multirow{2}{*}{4} & \multirow{2}{*}{6657} & 4     & 377     & 1              \\ \cline{4-6} 
                        &                    &                        & 16    & 5417    & 1              \\ \hline
\end{tabular}
\label{tab:fsm5_fr}}
\end{figure}

\noindent \textbf{Results on Large Graphs: }
Since the size-3 subgraphs of OK, UK and FR cannot be entirely stored in memory, we perform sampling during the matching phase and only store the sampled size-3 subgraphs. 
Table~\ref{tab:fsm5_ok_two} shows the number of size-5 frequent patterns with support larger than $0.001n$ found on OK graph. 
In the table, a matching sampling threshold $x$ means that $x$ subgraphs are sampled from each vertex during the matching phase. 
A larger matching sampling threshold results in longer matching time, although they are not proportional -- the matching time is mainly determined by the number of subgraph groups that need isomorphism checks. 
We find that the number of discovered patterns does not increase much with larger sampling thresholds. 
This is because $0.001n$ is a relatively large support threshold for this graph, and there are not many frequent patterns. 
To show the advantage of two-vertex exploration, we run single-vertex exploration for the same task.  
Since single-vertex exploration needs not match the size-3 subgraphs, we only perform sampling during the joining phase with threshold 1 and 2. 
The results are shown in Table~\ref{tab:fsm5_ok_single}. 
We can see that single-vertex exploration takes a longer time and finds fewer patterns. 

The results of 5-FSM on UK and FR graph are shown in Table~\ref{tab:uk_fr}.
We can see that matching takes a large proportion of the total execution time. 
This is because there are a lot of size-3 subgraphs and patterns in these two large graphs. 
However, if we consider matching as preprocessing and store the sampled size-3 subgraphs in memory, the joining procedure is fast. 
As shown in Table~\ref{tab:fsm5_uk}, our system can find frequent patterns on UK within a few minutes, and more patterns can be found by using larger joining sampling thresholds. 
Table~\ref{tab:fsm5_fr} shows the results of 5-FSM on FR graph. 
Again, the matching procedure is expensive. 
Once the size-3 subgraph are sampled, we can find size-5 frequent patterns in a relatively short time with sampled join. 
For comparison, we also run single-vertex exploration with sampling threshold 2 on these two graphs. It cannot finish execution within 24 hours, so we print out the found patterns after 24 hours of execution. For UK, it  returns 5 frequent patterns when support threshold is set to $0.0001n$, and 0 frequent pattern when support threshold is $0.0005n$. For FR, single-vertex exploration  with sampling threshold 2 cannot return any frequent pattern within 24 hours.

\section{Related Work}
This section summarizes the graph pattern mining systems that are most related to our work.  

\noindent
\textbf{Exploration-based Systems: }
Arabesque~\cite{10.1145/2815400.2815410} is a distributed graph pattern mining system. It enumerates all possible embeddings in multiple rounds and uses a filter-process model to generate the results. It first propose the quick pattern technique for reducing isomorphism checks.  
RStream \cite{222571} is the first single-machine, out-of-core graph mining system. It supports a rich programming model that exposes relational algebra for developers to express various mining tasks and a runtime engine that can efficiently compute the relational operations. 
Pangolin \cite{10.14778/3389133.3389137} also targets  single-machine but provides GPU programming interface for acceleration. 
DistGraph~\cite{talukder2016distributed}, ScaleMine~\cite{abdelhamid2016scalemine} and G-miner~\cite{chen2018g} are all distributed graph mining systems that adopt breadth-first exploration. 
DistGraph focuses on reducing the communication of distributed computing when each node can only have a portion of the graph. 
ScaleMine proposes a two-phase mining approach to achieve good load balance and reduce communication in distributed computing. 
G-miner proposes a block-based graph partitioning technique and uses work stealing to achieve good load balance. 
Because these systems use breadth-first exploration and need to store all intermediate results, they are not able to mine large patterns on large graphs. 
Fractal~\cite{10.1145/3299869.3319875} is also exploration-based, but it supports depth-first exploration to reduce the memory consumption. 
All of the existing systems adopt single-vertex exploration. 
Our system is the first to adopt multi-vertex exploration for mining larger pattern in graphs. 




\noindent
\textbf{Pattern-based Systems: }
AutoMine~\cite{10.1145/3341301.3359633} is a single-machine graph mining system that features compiler-based optimizations. 
Their main idea is to enumerate all the unlabeled patterns of a particular size and match them one-by-one on a graph. 
Because the patterns are given, AutoMine is able to search an optimal matching strategy and combine the matching procedures of multiple patterns. 
Because of its depth-first matching order, AutoMine is hard to benefit from the anti-monotone property of FSM. 
Also, when the pattern size is more than 7, enumerating the patterns becomes difficult. 
Peregrine~\cite{10.1145/3342195.3387548} is another pattern-based system. 
Instead of enumerating all the patterns before matching, it discovers patterns based on the subgraphs it has explored and maintains a list of the patterns. 
The main issue with Peregrine is that it needs to rematch the frequent patterns in each step, which leads to redundant computation.   
DwavesGraph~\cite{chen2020dwarvesgraph} is a recently proposed pattern-based graph  mining system. It is based on the idea that the task of matching a large pattern can be divided into smaller tasks of matching the subpatterns.  
Similar to AutoMine, it needs to know all the unlabeled patterns in advance. Thus, it cannot discover large patterns. 
In fact, DwavesGraph paper only reports result for 3-FSM.

\noindent
\textbf{Approximate Pattern Mining: }
Sampling has been proposed by earlier works~\cite{al2009output, saha2015fs3} to accelerate FSM in a database of graphs. In this setting, a pattern is considered frequent if it exists in more than a certain amount of graphs. 
The main idea of these works is to perform random walk in the space of all patterns. Every time it walks from one pattern to another, it calculates a probability distribution of all candidate patterns. 
By carefully setting the sampling probability at each step, they ensure that patterns of higher supports are more likely to be sampled~\cite{al2009output}. 
More recent works consider FSM on a single graph since it is more commonly used in real applications and is more general (a list of graphs can be considered as a single graph with disconnected components)~\cite{10.14778/2732286.2732289}.  
Sampling has also proposed to accelerate pattern-based graph mining in this setting~\cite{222637, 8411069, pavan2013counting}. 
The main idea is to sample edges in the graph based on the given patterns and estimate the actual results with the sampled results.  
These methods need to know the patterns in advance. It is not obvious how they can be applied to the exploration-based systems. 
We fill this gap and show that FSM can be accelerated by sampling the subgraphs in each key group of the join operation. 

\section{Conclusion}

In this work, we propose a novel two-vertex exploration method to accelerate frequent subgraph mining.  
Based on two-vertex exploration, we further improve the performance through an index-based quick pattern technique and subgraph sampling.  The experiments show that our method outperforms other state-of-the-art graph mining systems for FSM on various input graphs and pattern sizes. 

\bibliographystyle{ACM-Reference-Format}
\bibliography{reference.bib}

\end{document}